\newtheorem{proposition}{Proposition}
\newtheorem{corollary}{Corollary}
\newtheorem{lemma}{Lemma}
\newtheorem{definition}{Definition}
\def\R{\mathbb R}
\def\Z{\mathbb Z}
\def\calP{\mathcal P}
\def\half{\frac 1 2}
\begin{document}

\title{Long cycles in linear thresholding systems}

\author{Anna Laddach}
\address{Francis Crick Institute, London}

\author{Michael Shapiro}
\address{Francis Crick Institute, London}

\date{22 November 2023}

\maketitle

\section*{Introduction}

Linear thresholding systems have been studied as models of neural
activation \cite{coolen} and more recently as models of cell-intrinsic
gene regulation \cite{ 2017JPhA...50P5601H}.  More generally, they
represent dynamics on a set of $N$ objects which can be either on or
off.  That is, they are maps
\begin{align*}
  & f = f_{(J,\theta)} : \calP_N = \{0,1\}^N \subset \R^N \to  \{0,1\}^N  \\
  & f(x) = \theta(Jx)
\end{align*}
where
$J$ is a linear map and $\theta$ is a thresholding function
\begin{align*}
  & \theta(x_1,\dots,x_N) = (y_1,\dots,y_N) \\
  & y_i = \begin{cases}
    0 & \text{if } x_i \le \theta_i \\
    1 & \text{if } x_i > \theta_i \\
  \end{cases}
\end{align*}
We will restrict to the case in which each $\theta_i$ is positive.
Since there are $2^N$ elements of $\calP_N$, this provides an upper
bound on the length of any cycle for a linear thresholding system in
dimension $N$.  Under the assumption that the $\theta_i$ are positive,
0 is a fixed point and this reduces this bound to  $2^N - 1$.  Here we
exhibit systems of arbitrarily large $N$ with cycles of length greater
than $e^{\sqrt{N}}$.  Along the way, we introduce a direct product on
linear thresholding systems.

\section*{Background and definitions}

We record here some basic definitions and results.

\begin{definition}
  ~
  \begin{itemize}
  \item We say two linear thresholding systems $(J,\theta)$ and
    $(J',\theta')$ are {\em equivalent} and write  $(J,\theta) \sim
    (J',\theta')$ if $f_{(J,\theta)} = f_{(J',\theta')}$.
  \item We say that $(J,\theta)$ is {\em generic} if for each $x \in
    \calP$, and each cooridinate $i$, $(Jx)_i \ne \theta_i$.
  \item We will say that $(J,\theta)$ is {\em robust} if there is an
    open set $U_J$ around $J$ and an open set $U_\theta$ around $\theta$
    such that for each $(J',\theta') \in U_J \times U_\theta$,
    $(J,\theta) \sim (J',\theta')$
  \end{itemize}
  
\end{definition}

The following are not hard to see:

\begin{proposition}
  ~
  \begin{itemize}
     \item $(J,\theta)$ is robust if and only if it is generic.
     \item Every linear thresholding system is equivalent to a generic
       system.
     \item Every linear thresholding system is equivalent under a
       linear change of coordinates to a system in which the
       $\theta_i$ are independent of $i$. Further, their value can be
       taken to be any positive number.
  \end{itemize}
\end{proposition}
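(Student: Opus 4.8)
The three bullets are largely independent, so I would prove them in turn. The one feature to keep in front of me throughout is the asymmetry of the threshold rule: when $(Jx)_i=\theta_i$ the output coordinate is $0$, so a perturbation of $\theta$ (or of $J$) that leaves $f$ unchanged must move such a borderline value strictly \emph{below} the new threshold — one may raise $\theta_i$ but not lower it. This dictates the direction of every perturbation below.

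\emph{Robust $\iff$ generic.} For ``generic $\Rightarrow$ robust'' I would exploit finiteness of $\calP_N$: set $\delta=\min\{\,|(Jx)_i-\theta_i| : x\in\calP_N,\ 1\le i\le N\,\}$, which is positive by genericity. Each quantity $(J'x)_i-\theta'_i=\sum_j J'_{ij}x_j-\theta'_i$ is a continuous (indeed affine) function of $(J',\theta')$ and $x$ ranges over a set where $x_j\in\{0,1\}$, so on a small enough product neighbourhood $U_J\times U_\theta$ it stays within $\delta$ of $(Jx)_i-\theta_i$ and hence keeps the same sign; since there are only finitely many pairs $(x,i)$, a single such neighbourhood works for all of them, and there $f_{(J',\theta')}=f_{(J,\theta)}$. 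For the converse I would argue contrapositively: if $(Jx_0)_{i_0}=\theta_{i_0}$, then $x_0\ne 0$ (else the left-hand side is $0<\theta_{i_0}$), so some coordinate $j$ of $x_0$ equals $1$; adding a small $\eta>0$ to the entry $J_{i_0 j}$ produces an arbitrarily nearby $J'$ with $(J'x_0)_{i_0}=\theta_{i_0}+\eta>\theta_{i_0}$, which flips $f(x_0)_{i_0}$ from $0$ to $1$. Hence no neighbourhood of $(J,\theta)$ is contained in its equivalence class, so $(J,\theta)$ is not robust.

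\emph{Equivalence to a generic system.} I would keep $J$ and enlarge each threshold slightly. Choose $\varepsilon_i>0$ smaller than every positive number of the form $(Jx)_i-\theta_i$ with $x\in\calP_N$ (and any $\varepsilon_i>0$ if there are none), and set $\theta'_i=\theta_i+\varepsilon_i$. Then each $\theta'_i$ is positive; no value $(Jx)_i$ equals $\theta'_i$, so $(J,\theta')$ is generic; and for every $x$ and $i$ the value $(Jx)_i$ lies on the same side of $\theta'_i$ as of $\theta_i$ in the sense that matters — values $\le\theta_i$ are still $<\theta'_i$, and values $>\theta_i$ are still $>\theta'_i$ by the choice of $\varepsilon_i$ — so $f_{(J,\theta')}=f_{(J,\theta)}$. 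This is exactly where the asymmetry above is used: decreasing a $\theta_i$ past which some $(Jx)_i$ sits would change $f$.

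\emph{Normalising the thresholds.} Given any target $c>0$, let $D=\operatorname{diag}(c/\theta_1,\dots,c/\theta_N)$, which is an invertible linear map since the $\theta_i$ are positive, and put $J'=DJ$. Then $(J'x)_i=(c/\theta_i)(Jx)_i$, so $(J'x)_i>c \iff (Jx)_i>\theta_i$, whence $f_{(J',(c,\dots,c))}=f_{(J,\theta)}$, i.e. $(J,\theta)\sim(DJ,(c,\dots,c))$. The point worth spelling out is that this is genuinely a linear change of coordinates — postcomposition of $J$ with $D$ together with the induced rescaling $\theta\mapsto D\theta=(c,\dots,c)$ — and that $D$ must be diagonal (or monomial) for coordinatewise thresholding to be carried to coordinatewise thresholding; a non-diagonal conjugacy would leave the class of thresholding systems. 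Composing this with the construction of the previous paragraph yields a generic system with all thresholds equal to $c$ if both properties are wanted at once. I do not expect a real obstacle in any of the three parts; the only thing demanding care, and the place where a naive argument fails, is the one-sidedness forced by the convention that $(Jx)_i=\theta_i$ gives output $0$.
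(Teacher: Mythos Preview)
The paper does not actually prove this proposition; it is introduced with ``The following are not hard to see'' and left without argument. Your proposal supplies a complete and correct proof. Each part is handled by the natural construction --- finiteness of $\calP_N$ plus continuity for robust $\Leftrightarrow$ generic, a one-sided upward perturbation of the $\theta_i$ to achieve genericity while preserving $f$, and a positive diagonal rescaling $J\mapsto DJ$, $\theta\mapsto D\theta$ to normalise thresholds --- and your explicit attention to the asymmetry of the convention $(Jx)_i=\theta_i\Rightarrow y_i=0$ is exactly the care the second part requires. There is nothing in the paper to compare against beyond this, and I see no gap in what you have written.
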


We will take our thresholding value to be $\theta=\half$.  We will be
interested in cases where $J$ permutes $\calP$, and here thresholding
acts as the identity.

The dynamics of a function $f$ on a finite set $X$ can be seen as a
directed graph $\Gamma(X,E)$ where the edges $E$ are the pairs
$\{(x,f(x)) \mid x \in X \}$.  Each connected component of $\Gamma$
consists of a recurrent cycle $C$ (possibly of length 1) possibly
decorated with directed trees $T_i$.  Each $T_i$ is directed towards
its root, which is a vertex of $C$.  The vertices on the cycles are
{\em recurrent}, the others are {\em transient}.

\section*{Long cycles}

We start by defining a direct product on linear thresholding
systems.

\begin{definition}
Given two systems, $(J,\theta)$ and $(J',\theta)$ where the first has
$N$ genes and the second has $N'$ genes, the {\em product system},
$(J,\theta)\oplus(J',\theta)$ is the system $(\bar J,\theta)$ on
$N+N'$ genes where
$$ \bar J = \left[ \begin{matrix}
    J & 0 \\ 
    0 & J'\end{matrix} \right].$$
\end{definition}

\begin{proposition}
The dynamics of the product system are the product of the dynamics of
the two systems.  That is,  let $f$ and $f'$ be the functions of
the two systems and $\bar f$ be the function of the product system.
Let $x$ and $x'$ be elements of $\calP(1,\dots,N)$ and
$\calP(1,\dots,N')$.  Then 
$$\bar f\left(\left[\begin{matrix} x\\ x'
\end{matrix}\right]\right) = 
\left[\begin{matrix} f(x) \\ f'(x') \end{matrix} \right]
$$ \qed
\end{proposition}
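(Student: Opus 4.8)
The plan is to simply unwind the definitions, using the block structure of $\bar J$ together with the fact that $\theta$ acts coordinate by coordinate. First I would write $y = \left[\begin{matrix} x \\ x' \end{matrix}\right] \in \calP_{N+N'}$ and compute $\bar J y$ by block matrix multiplication, obtaining $\bar J y = \left[\begin{matrix} Jx \\ J'x' \end{matrix}\right]$; in particular the first $N$ coordinates of $\bar J y$ depend only on $x$ (and agree with those of $Jx$), while the last $N'$ coordinates depend only on $x'$ (and agree with those of $J'x'$).

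Next I would recall that the thresholding function is applied componentwise: the $i$-th coordinate of $\theta(z)$ is a function of $z_i$ alone (here, comparison of $z_i$ with $\half$). Consequently $\theta$ respects the splitting $\R^{N+N'} = \R^N \times \R^{N'}$, i.e. $\theta\!\left(\left[\begin{matrix} u \\ u' \end{matrix}\right]\right) = \left[\begin{matrix} \theta(u) \\ \theta(u') \end{matrix}\right]$ for all $u \in \R^N$ and $u' \in \R^{N'}$. This is the one place where it matters that we have normalized all three systems to the common threshold $\theta = \half$, so that literally the same $\theta$ appears in $f$, $f'$, and $\bar f$.

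Combining these two observations gives $\bar f(y) = \theta(\bar J y) = \theta\!\left(\left[\begin{matrix} Jx \\ J'x' \end{matrix}\right]\right) = \left[\begin{matrix} \theta(Jx) \\ \theta(J'x') \end{matrix}\right] = \left[\begin{matrix} f(x) \\ f'(x') \end{matrix}\right]$, which is exactly the assertion. There is no genuine obstacle here; the statement is a routine verification, and the only point requiring any care is the bookkeeping that $\theta$ is the same componentwise map in all three systems — which is precisely what the earlier normalization of thresholds buys us. I would likely append a one-line remark that iterating yields $\bar f^{\,k}(y) = \left[\begin{matrix} f^{k}(x) \\ (f')^{k}(x') \end{matrix}\right]$, so that the dynamics graph $\Gamma$ of the product is the product of the graphs of $f$ and $f'$ and a pair of recurrent cycles of lengths $\ell$ and $\ell'$ yields a recurrent cycle of length $\operatorname{lcm}(\ell,\ell')$, though this consequence properly belongs to the development that follows rather than to the proof itself.
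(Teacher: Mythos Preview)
Your proposal is correct; the paper itself offers no proof beyond the \qed\ symbol, treating the statement as an immediate consequence of the block-diagonal form of $\bar J$ and the componentwise action of $\theta$, which is exactly what you unwind explicitly. Your added remark about iterates and $\operatorname{lcm}$ anticipates the paper's next corollary, as you note.
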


\begin{corollary}
Suppose $x_1,\dots,x_r$ is a cycle of $(J,\theta)$ of length $r$ and
$x'_1,\dots,x'_s$ is a cycle of $(J',\theta)$ of length $s$.  Then
together, they form a cycle of length $\mathop{\mathrm {lcm}}(r,s)$ in
the product system. \qed
\end{corollary}

We now take a moment to consider the orbits of $(J_p,\theta)$ where
$J=J_p$ is the matrix that cyclically permutes the bases of $\R^p$, $p$
is prime and $\theta=\frac{1}{2}$.  Observe that in this case the
thresholding operation is superfluous since it acts as the identity on
$J \calP$.  Note that $J$ generates the cyclic group $\Z_p$ and
thereby induces an action of $\Z_p$ on $\calP$.

\begin{lemma} ~
  \begin{itemize}
  \item $(J,\theta)$ has no transient states.
  \item The number of coordinates which are one is constant on each orbit.
  \item Each cycle of $(J,\theta)$ has length 1 or $p$.
  \item There are two cycles of length 1.
  \item There are $(2^p - 2) / p$ cycles of length $p$.
  \end{itemize}
\end{lemma}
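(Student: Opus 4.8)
The plan is to reduce every bullet to the elementary theory of the cyclic group $\Z_p = \langle J \rangle$ acting on bit strings. First I would note that since $J = J_p$ is a permutation matrix, it maps $\calP$ bijectively onto $\calP$; in particular every coordinate of $Jx$ is $0$ or $1$ and hence never equals $\theta = \half$, so $\theta(Jx) = Jx$. Thus $f = f_{(J,\theta)}$ is literally the cyclic shift on $\calP = \{0,1\}^p$, which is a bijection of a finite set. The functional graph of a bijection of a finite set is a disjoint union of cycles with no trees attached, which immediately gives the first bullet: there are no transient states.

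For the second bullet, the cyclic shift merely permutes the entries of $x$, so it preserves the Hamming weight $\sum_i x_i$; hence the number of $1$'s is an invariant of each orbit. For the third bullet, each cycle of $f$ is an orbit of the group $\Z_p$ acting on $\calP$, so by the orbit–stabilizer theorem its size divides $|\Z_p| = p$; since $p$ is prime the cycle length is $1$ or $p$.

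For the fourth bullet, a cycle of length $1$ is a fixed point of the shift, i.e. a string with $x_1 = x_2 = \cdots = x_p$, and there are exactly two of these: the all-$0$ vector (the fixed point $0$ noted in the introduction) and the all-$1$ vector. For the last bullet, delete these two fixed points; by the third bullet the remaining $2^p - 2$ states are partitioned into cycles each of length exactly $p$, so there are $(2^p - 2)/p$ of them. The only side remark needed is that this is an integer, which is immediate from Fermat's little theorem, $2^p \equiv 2 \pmod p$.

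I do not expect a genuine obstacle here; the single point deserving a careful sentence is the identification of $f$ with the honest cyclic shift — that the thresholding step is inert precisely because $J$ already preserves $\calP$ — after which each bullet is a one-line consequence of $\Z_p$ acting on $\calP$ with exactly the two constant strings as fixed points and all other orbits free.
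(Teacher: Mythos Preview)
Your proof is correct and follows essentially the same route as the paper: identify $f$ with the cyclic shift on $\calP$ (thresholding inert), observe that orbits of $\Z_p$ have size dividing $p$, pin down the two constant strings as the only fixed points, and count what remains. The only cosmetic differences are that the paper phrases the first bullet via $J^p = I$ rather than ``bijection of a finite set'' and argues the fourth by exhibiting two cyclically adjacent unequal coordinates in any non-constant string; your Fermat remark is a harmless bonus.
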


\begin{proof} ~
  \begin{itemize}
  \item Since $J^p = I$, $J^p x = x$ for each $x \in \calP$.
  \item Each $x \in \calP$ is a set coordinates which are one.  $J$ acts
    to cyclically permute this set.  It follows that the number of
    coordinates which are one is constant along the orbit.
  \item The length of each orbit divides $p$.  Since $p$ is prime, it
    follows that this is necessarily 1 or $p$.
  \item The points $(0,\dots,0)$ and $(1,\dots,1)$ are fixed by $J$.
    On the other hand, if $x$ is neither of these, $x$ has two
    (cyclically) adjacent coordinates which are not equal.  Hence $Jx
    \ne x$, so the orbit of $x$ has length $p$.
  \item This follows directly from the previous statement.
  \end{itemize}
\end{proof}

\begin{corollary} \label{longCycles}
Let $p_1,\dots,p_k$ be a sequence of distinct primes, $N = \sum p_i$,
$P = \prod p_i$.  Then there is a system $(J,\theta)$ on $N$ genes with a
cycle of length $P$.  \qed
\end{corollary}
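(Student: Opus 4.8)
The plan is to realize the desired system as the direct product of the cyclic‑permutation systems $(J_{p_1},\theta),\dots,(J_{p_k},\theta)$ analyzed in the preceding Lemma. First I would observe that the block‑diagonal construction of the product is associative: iterating it, the system
$$(J,\theta) := (J_{p_1},\theta)\oplus(J_{p_2},\theta)\oplus\cdots\oplus(J_{p_k},\theta)$$
is well defined, independent of the parenthesization, and its matrix $J$ is the block‑diagonal matrix with diagonal blocks $J_{p_1},\dots,J_{p_k}$. It therefore acts on $\calP_N$ with $N=\sum_i p_i$. This step requires only unwinding the definition of $\oplus$; there is nothing to check beyond the evident fact that stacking blocks on the diagonal does not depend on the order in which the stacking is performed.

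Next I would invoke the Lemma: for each $i$, since $p_i$ is prime we have $2^{p_i}-2>0$, so $(J_{p_i},\theta)$ has at least one cycle of length exactly $p_i$; fix such a cycle $x^{(i)}_1,\dots,x^{(i)}_{p_i}$. Then I would apply the product Corollary (the one asserting that cycles of the two factors of lengths $r$ and $s$ combine to a cycle of the product system of length $\mathop{\mathrm{lcm}}(r,s)$), extended from two factors to $k$ factors by induction on $k$. The base case $k=1$ is immediate. For the inductive step, write the full product as the two‑factor product of $(J_{p_1},\theta)\oplus\cdots\oplus(J_{p_{k-1}},\theta)$ with $(J_{p_k},\theta)$; by the inductive hypothesis the first factor has a cycle of length $\mathop{\mathrm{lcm}}(p_1,\dots,p_{k-1})$, and combining it with the length‑$p_k$ cycle of the last factor gives a cycle of the full product of length $\mathop{\mathrm{lcm}}\bigl(\mathop{\mathrm{lcm}}(p_1,\dots,p_{k-1}),p_k\bigr)=\mathop{\mathrm{lcm}}(p_1,\dots,p_k)$. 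Here the associativity of $\oplus$ established in the first step is exactly what lets the two‑factor statement bootstrap to $k$ factors.

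Finally, since the $p_i$ are distinct primes, $\mathop{\mathrm{lcm}}(p_1,\dots,p_k)=\prod_i p_i=P$, so $(J,\theta)$ is a system on $N=\sum_i p_i$ genes possessing a cycle of length $P$, which is the assertion of the Corollary.

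The only real obstacle is the bookkeeping of the previous paragraph — namely checking that "a cycle of the product is the product of the cycles" iterates correctly past two factors — and this is dispatched by the associativity of $\oplus$ together with the induction above. There is no analytic or combinatorial content beyond this, so I expect the proof to be short.
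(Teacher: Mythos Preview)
Your proposal is correct and follows exactly the argument the paper intends: the corollary is stated with a bare \qed{} because it is immediate from taking the product $(J_{p_1},\theta)\oplus\cdots\oplus(J_{p_k},\theta)$, invoking the Lemma to obtain a $p_i$-cycle in each factor, and applying the preceding product corollary (iterated to $k$ factors) together with $\mathop{\mathrm{lcm}}(p_1,\dots,p_k)=\prod p_i$ for distinct primes. Your explicit handling of associativity and the induction on $k$ merely spells out what the paper leaves implicit.
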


Note that the sum of the first $n$ primes grows as $\frac 1 2
n^2\log(n)$ \cite{primeSum} while the product of the first $n$ primes
grows as $e^{(1+o(1)) n \log(n)}$ \cite{ruiz}.  Thus, for arbitrarily
large values $N$ there are systems on $N$ genes with cycles of length
greater than $e^{\sqrt{N}}$.

In fact, $(J,\theta)$ is rich in long cycles.  If $S \subset
\{1,\dots,k \}$, then there are
$$ \prod_{i \in S} (2^{p_i} - 2) / p_i $$
  orbits of length $\prod_{i \in S} p_i$.

Our definition is restricted to the case where the two systems have the
same thresholding function $\theta$.  However up to equivalence, they
can be given the same $\theta$ and the equivalence class of the
product system depends only on the equivalence class of the factors.
Thus, up to equivalence, the product is defined on arbitrary pairs of
systems.

\bibliographystyle{plain}
\bibliography{linearThresholding}

\end{document}